\newcommand{\uproman}[1]{\uppercase\expandafter{\romannumeral#1}}
\newtheorem{theorem}{Theorem}
\newenvironment{proof}{\emph{Proof.}}{\hfill \qed}
\newcommand{\N}{\mathrm{I\kern-.23emI\kern-.29em N}}
\begin{document}

\begin{frontmatter}


\title{Simple grid polygon online exploration revisited}


\author{Maximilian Brock, Martin Brückmann, Elmar Langetepe, Raphael Wude}

\affiliation{organization={University of Bonn, Institute of Computer Science V},
            city={Bonn},
            postcode={D-53113},
  }

\begin{abstract}
Due to some significantly contradicting research results, we reconsider the problem of the online exploration of a simple grid cell environment.
 In this model an agent attains local information about the direct four-neigbourship of a current grid cell  and can also successively build a map of all detected cells. Beginning from a starting cell at the boundary of the environment, the agent has to visit any cell of the grid environment and finally has to return to its starting position. The performance of an online strategy is given by competitive analysis. We compare the number of overall cell visits (number of steps) of an online strategy to the number of such visits in the optimal offline solution under full information of the environment in advance. The corresponding worst-case ratio gives the competitive ratio.

 The aforementioned contradiction among two publications turns out to be as follows: There is a journal publication that claims to present an optimal
 competitive strategy with ratio~$\frac{7}{6}$ and a former conference paper that presents a lower bound of~$\frac{20}{17}$.
 In this note we extract the flaw in the upper bound and also
 present a new slightly improved and (as we think) simplified general lower bound of~$\frac{13}{11}$.
\end{abstract}

%

\begin{keyword}
Online algorithms, competitive analysis, grid exploration.
\end{keyword}

\end{frontmatter}



\section{Introduction}\label{sect-intro}
Back in 2005, Icking et al.~\cite{ltepe} elaborated on the simple grid polygon exploration problem. In this online exploration setting a robot explores an unknown grid polygon without holes while having a limited sensing capability: It can only perceive the four direct neighbouring grid cells. The optimization approach aims to find the shortest exploration path from the starting cell, i.e., the sequence of steps exploring all grid cells and returning to its start position. It is assumed that the agent starts at  the boundary of the polygon. 
The performance of the online exploration strategy can be measured against the optimal shortest path in the offline situation in which the robot has full knowledge of the grid topology. This ratio is known as competitive ratio and states the performance of an online strategy over all possible polygons relative to the offline optimum. The authors proved that their strategy "SmartDFS" has a $\frac{4}{3}$-competitive ratio. At the same time they presented the first lower bound construction of $\frac{7}{6}$, meaning that against any strategy a simple grid polygon can be presented such that the exploration path in the online setting is at best $\frac{7}{6}$ times the optimal offline path in which case the environment is known in advance. Note that in the case that the agent need not return to the start, there is a lower bound  construction of~$2$ for the competitive ratio and a DFS walk (for the cells) will precisely attain this ratio against any optimum. This lower bound can already be adapted from a more general construction by Icking et al. \cite{icking2002competitive}, compare also Section~\ref{sect-relwork}.

Later in 2010 Kolenderska et al.~\cite{kole} presented an improved $\frac{5}{4}$-competitive strategy as well as an improved lower bound of $\frac{20}{17}$. This further closed the gap between the two bounds. Recently, in 2021 Wei et al.~\cite{wei} proposed an even better exploration strategy with a $\frac{7}{6}$-competitive ratio. Regarding the first lower bound in~\cite{ltepe} they claim to present an overall optimal strategy. Looking at the two latest publications, we have to point out that the previously mentioned $\frac{7}{6}$-competitive exploration strategy (or upper bound) opposes the former lower bound of $\frac{20}{17}$. These results motivated us to resolve the contradiction. After revisiting the problem, we furthermore present an improved and simplified lower bound of $\frac{13}{11}$.

Due to the recent upcoming flaws we decided to recapitulate the model and the corresponding  facts very precisely. The details turn out to be very important and this write-up gives a somewhat self-contained overview.

\section{Notation and model of simple grid polygons}\label{sect-model}
We consider a regular square grid environment where any grid cell (square) is either a \emph{free} or a \emph{boundary} cell. At any point in time the agent is located in one of the free cells and can check
 the status of the four direct neighbouring cells. It is possible to move into a free neighbouring cell by one step.  We assume that the agent starts in a free cell~$s$ that is adjacent to at least one boundary cell. The task is to explore (visit) all free cells in the (cell neighbour-ship) connected component (of free cells) of~$s$ and return to~$s$. So we assume that the connected component of~$s$ is bounded or surrounded by (a finite number of) boundary cells. A movement from one cell to a free neighbouring cell is counted as a single step. We do not take turning costs into account. The agent can successively build a map of the environment.
 We further assume that the connected component of free cells is \emph{simply} connected which is interpreted as such there are no \emph{holes} inside. I.e., a set of boundary cells \emph{fully} surrounded by free cells is forbidden. 
  Therefore we can also interpret the path along the outer boundary of the cell-environment as a simple polygon without self-intersections
 (but may be with touching); compare Figure~\ref{fig-mod} for illustration. Inside the polygon (boundary) there are only free cells. For example in Figure~\ref{fig-mod}(iv) the boundary cell~$B$ is not a hole, it is not fully surrounded by free cells and does not lie inside the polygon.
 Grid cell environments of these kind are therefore denoted as \emph{simple grid polygons}. 
 In Section~\ref{sect-relwork} we will briefly compare simple grid polygons to the notion of \emph{solid grid graphs}, they are almost the but not exactly the same. 

  \begin{figure}[htp]
 \begin{center}
  \includegraphics[scale=0.6]{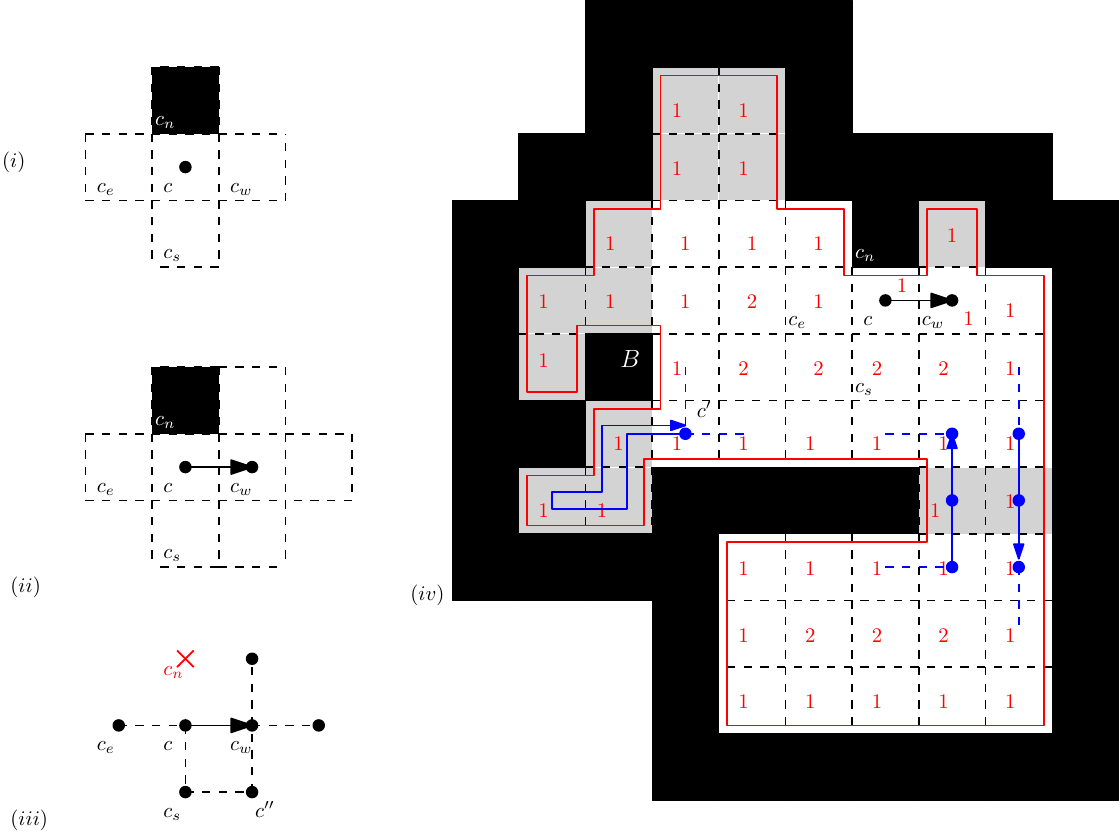}
\caption{(i) The local view of the agent in the beginning, only the status (free/boundary) of the direct 4 neighbouring cells can be achieved.
(ii) The currently known cell map after one step from $c$ to the neighbouring cell. (iii) The model can also be directly interpreted as a (special) grid graph model where cells are represented by vertices located in the center of the cell. Note that by visiting a \emph{vertex} $c_w$ and detecting the edge $(c_w,c'')$ the edge $(c_s,c'')$ has to exist for a one-to-one correspondence. In the given model it is also already known at position $c$ that a connected vertex $c_n$  cannot exist, this can be different in the notion of \emph{solid grid graphs}.  (iv) The aforementioned local situation embedded in a simple grid cell environment also denoted as a \emph{simple grid (cell) polygon} by interpretation of the boundary path as a simple polygon (in red). Here the boundary cell~$B$ is not a hole, it is not fully surrounded.
The grey cells indicate the five independent narrow passages of the environment. By (online) DFS movements keeping along the boundary these passages will always be passed optimally (exemplified by some blue paths). We do not count turnings, so the blue (sub-)path in the left/lower narrow passage starting and ending at $c'$ consists of 6 steps.
}
\label{fig-mod}
\end{center}
\end{figure}

 Free grid cells that touch the boundary (also by a single vertex of the square) belong to the first layer. The second layer can be defined recursively by deleting the cells of the first layer and considering the corresponding new grid polygons (could be more than one by disconnection). A grid cell of the first layer belongs to a \emph{narrow passage} when the
 local deletion of the cell does not change the layer number of its direct adjacent neighbours. A narrow passage is a connected collection of such cells. For example, in Figure~\ref{fig-mod} the
 cell $c_e$ is in the first layer but it is not part of a narrow passage, the deletion of $c_e$ will cause that its neighbour in the south will be in the first layer of the resulting grid polygon.

  If an agent follows the boundary by DFS left-hand (or right-hand) rule, a narrow passage will be traversed optimally even though double visits occur. A narrow passage always has precise entrance and quit cells also located in the first layer. In the following we will omit the full black boundary cells and only use boundary edges of the simple grid polygon.

Note, that a simple full (online) DFS walk on the free cells/vertices results in a full round-trip that visits all reachable cells and uses no more than $2(V-1)$ steps for $V$ cells. 
So this is already a 2-approximation against an optimal offline solution for the same task. In other words DFS is 2-competitive. 
This is a tight bound, 
for cell environments with holes, it was shown that there is no strategy that attains a better factor than~$2$. So the precise competitive complexity below~$2$for simple grid environments is still of some interest,  see Section~\ref{sect-relwork} for more details.
\bigskip

The notion of \emph{competitive analysis} goes back to Sleator and Tarjan \cite{sleator1985amortized} and was first used in the context of list update and paging problems. Competitive analysis measures the quality of an online algorithm under incomplete information by comparing it against an optimal offline algorithm with complete knowledge. 

Let us consider a specific algorithmic problem and let $\mathcal{P}$ be the set of all instances of the problem. An optimal offline algorithm \emph{OPT} solves the problem and runs under full information. Furthermore, the cost of such an algorithm for an instance  $P\in \mathcal{P}$ is denoted as $S_{OPT}(P)$ and optimality means that \emph{OPT} attains the minimal cost for any instance and among any algorithm. Let us further assume that an online algorithm \emph{ALG} also solves any problem instance of  $P\in \mathcal{P}$ but attains the problem information successively. Thus \emph{ALG} might not solve the problem with minimal cost, i.e.,  $S_{ALG}(P)\geq S_{OPT}(P)$.  We call the online algorithm $C$-competitive if there are constants $A$ and $C$ such that the inequality 
\begin{equation}\label{equ-compAn}
S_{ALG}(P)\leq C \cdot S_{OPT}(P) + A
\end{equation}
holds for all $P\in  \mathcal{P} $. So \emph{ALG} guarantees to be no worse than~$C$ times the optimal offline solution, apart from an additive constant~$A$. This second constant~$A$ covers starting situations. It is required when for some fixed small instances some first steps of the online algorithm cause relatively high cost against the optimum. In this case the problem might not be analysed properly. Thus it is allowed that small instances or special starting situations can be fully covered or excluded by additive constants. 

In the given grid environment situation  for the analysis of concrete exploration strategies it was (mainly) guaranteed that the ratio $\frac{S_{ALG}(P) }{S_{OPT}(P) }\leq C$ directly holds for any $P\in \mathcal{P}$. This means that the additive constants have been neglected or were never used within the analysis of a strategy. In our setting the precondition of starting the exploration on the boundary can be relaxed by making use of an additive constant. For the construction of a lower bound on~$C$, i.e., try to prove that no strategy approximates better than a fixed constant~$C$, a potential additive constant always has to be taken into account by definition.  To this end we guarantee that for any $\epsilon>0$ we can construct instances $P_{\epsilon} \in  \mathcal{P}$ such that $\frac{S_{ALG}(P_{\epsilon}) }{S_{OPT}(P_{\epsilon}) }> (C-\epsilon)$ holds and the construction allows to let such $P_{\epsilon}$ and especially the cost $S_{ALG}(P_{\epsilon})$ become arbitrarily large. Therefore, we will finally overrun any fixed additive constant~$A$ and the ratio $C$ indeed gives a  lower bound on the competitive ratio for any strategy, see Theorem~\ref{theo-LB}.

\section{Related work}\label{sect-relwork}
It is very obvious that the online grid cell exploration problem can be easily  transformed to a (special) online grid graph problem. A free cell in the grid cell environment represents a vertex of the graph and two direct adjacent free cells are connected by an edge; see Figure~\ref{fig-mod}~(ii) and~(iii) for the precise correspondence.
For this reason any work regarding the exploration of the vertices of a planar grid graph is related to our topic.

In the literature the model of the simple cell environment described in the previous section is also closesly related to the notion of a \emph{solid planar grid graph}. A planar grid graph is a (connected) graph $G=(V,E)$ with vertex set $V$ as a (sub)set of the planar integer lattice and with edges $e=(v,w)\in E$ that connect any pair of such vertices that are unit distance away from each other. Therefore such graphs can be considered to be fully induced by its vertices. Such an integer lattice grid graph $G$ is called \emph{solid} if it does not have any \emph{holes}, i.e., its vertex complement in the planar integer lattice is (unit-edge) connected. Keep in mind that this interpretation would mean that the cell~$B$ in Figure~\ref{fig-mod}(iv) indeed counts as a hole, there is no edge for connecting $B$ to the rest of the boundary cells (vertices).   
So \emph{solid planar grid graphs} and \emph{simple grid polygons} are not precisely but almost the same. For the argumentation used in this paper  (lower bound and flaws) there will be no difference because this situation does not occur. In general one has to take care of the slightly  different interpretations and models.   

The computational complexity of computing the (offline) optimal Travelling Salesman Route for a solid grid graph is (to our knowledge) still unknown. It still belongs to a collection of open geometric problems; Problem 54 at the \href{https://topp.openproblem.net/p54}{\emph{Open Problem Page}}.

The following facts for the offline (full-information) and the online (partial information) variants are known. For the online versions we assume a local detection of outgoing edges and the construction of a partially (known) map during the movements. 

In the offline case for general planar grid graphs the Hamiltonian-Cycle problem and the Traveling-Salesman problem are
NP-complete by Itai et al. \cite{itai1982hamilton}.
For solid planar grids it is possible to decide in polynomial time whether a Hamiltonian-Cycle exists and in this case the cycle can also be computed as shown by Umans and Lenhart~\cite{umans1997hamiltonian}. This does not help to find the optimal path when the answer for the cycle is "No", Problem 54 remains open. 

For the online version a simple online DFS strategy on the vertices of a general connected planar grid graph requires at most 2~times more steps than an optimal offline strategy, this  holds already for any graph in the given online model. It was shown by Icking et al. \cite{icking2002competitive} that on the other hand there is no such online strategy that can be better than a factor of~$2$ against the number of steps (vertex visits) of an optimal exploration strategy. This lower bound construction also works for planar grid graphs but it makes use of a (single and rectangular) hole. 

Furthermore, this lower bound construction can be easily extended to the case that the agent need not return to the start. In this setting, there is no strategy that can attain a better competitive ratio of~$2$ and DFS on the cells(vertices) precisely attains this ratio. Thus the competitive complexity of this case is precisely known. The lower bound can simply be given by a  horizontal cell-corridor of width~$1$ and unknown extension to the left and the right.

The just mentioned general lower bounds of~$2$ and the performance guarantee of~$2$ by a simple DFS walk on the cells, might indicate that the online exploration story is mainly over for general grids. 
But it is possible to refine the analysis by being more case sensitive w.r.t. the,  say, the overall \emph{fleshyness} of the environment.
It was shown by Gabriely and Rimon~\cite{gabriely2003competitive} that there is a strategy on general grid graphs that requires at most
$|V|$ + $|B|$ exploration steps, where $|V|$ denotes the number of vertices and $|B|$ denotes the number of vertices of the environment which are adjacent to the boundary  (here also diagonal adjacency counts). In the worst-case any vertex of the environment is adjacent to the boundary, and this is also given in the overall lower bound construction mentioned above from \cite{icking2002competitive}.

Also for the offline case some more progress has been done for finding the  optimal shortest exploration path in grid graphs, some kind of taxonomy was given by Arkin et al.~\cite{arkin2009not}.

One of the most challenging remaining interesting configurations
are solid grid graphs in the online and also in the offline version.
Some attempts in the offline case were done by Fekete et al.~\cite{feketetraveling}. For the online case the current best
known upper bound is $\frac{5}{4}$ in \citep{kole}, and the best lower bound of~$\frac{13}{11}$ will be presented here.

The problem for the analysis of a given strategy for solid planar grids is somehow intrinsic, because the optimal offline approach is not known in general. Furthermore, any analysis has to cope with a tedious case analysis which could make the analysis fragile as we will see below.

\section{Upper bound and recent flaw}\label{sect-flaw}
Almost any reasonable strategy behaves in the same general way. The agent runs in a DFS fashion with a so called left-hand (or right-hand) preference rule, see for example Figure~\ref{fig-flaw}~(ii). This lets a strategy generally run with left-hand along the boundary or also along already visited cells in order to visit new cells by single steps more or less optimally. If a decomposition of the overall cell environment is detected at a so-called \emph{split-cell}, the agent has to decide which of the components should be visited first.  It is a reasonable idea to prefer the component that is \emph{farther away} from the start, because the task is to finally come back. For example in Figure~\ref{fig-flaw}~(ii) by DFS (left-hand) the split-cell $s_1$ is detected and the component $C_1$ is visited first, recursively.  Note that in case of a decomposition in principle the outer boundary of at least one component (here $C_1$) is fully known but in general there is no polynomial time algorithm for this offline setting yet. Therefore, it is a reasonable idea to start the same strategy recursively in the component.

The current best strategy for the above exploration problem was presented 2021 in the \emph{Journal of Combinatorial Optimization} by Wei, Sun, Tan, Yao and Ren,  see~\cite{wei}. The authors claim to present a~$\frac{7}{6}$ competitive strategy in the above model. We briefly discuss the main new idea of this strategy  and the reason for the flaw in the analysis.

\begin{figure}[htp]
\begin{center}
  \includegraphics[scale=0.6]{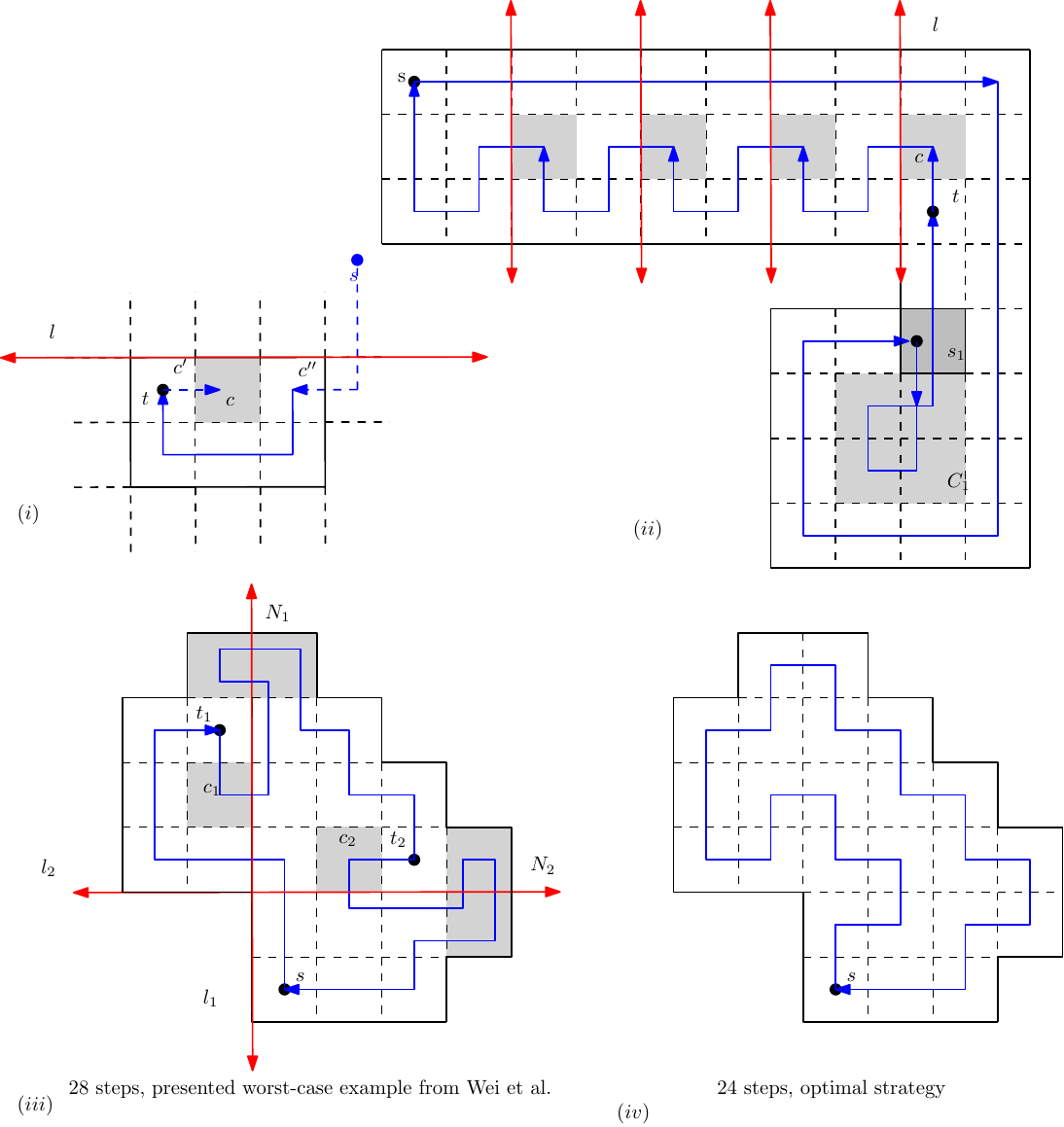}
\caption{(i) The strategy of Wei et. al~\cite{wei} tries to avoid single locally isolated cells $c$.
To this end a tangent line is computed. If it separates the currently active start cell~$s$ from the cell $c$, the cell $c$ is visited first. (ii) The idea is to meander in remaining corridors of width three optimally by applying this rule successively, first at time step~$t$ and then in the whole remaining corridor.
(iii)~The \emph{worst-case} example presented in
Figure~8 of Wei et. al~\cite{wei} applies this rule twice, see $l_1$ and $l_2$ for time stamps $t_1$ and $t_2$ and the corresponding \emph{isolated} cells $c_1$ and $c_2$, respectively. The strategy locally changes the preference and then moves on with
left-hand rule. 28 steps are required in total which gives a ratio of $\frac{7}{6}$ against the optimal offline solution which requires 24 steps only as presented in (iv).}
\label{fig-flaw}
\end{center}
\end{figure}

 Apart from the above very general efficient idea, the strategy of Wei et al.~\cite{wei} additionally tries to avoid to leave single already somehow locally isolated or surrounded cells~$c$ non-visited until coming back later.  Note that these cells are not split-cells, a cell-decomposition from the (recursive) start cell is not given.
   This is briefly indicated in Figure \ref{fig-flaw}~(i). If the agent has at some point in time~$t$ surrounded a single cell~$c$ by visiting 5 neighbouring cells around~$c$  in a $U$-shaped form before, the strategy sometimes should first visit~$c$ and then move on with the general schedule, for example the left-hand rule.
   Note that it is not important in what precise order the surrounding cells have been visited before.  Technically this new
 behavior is implemented by building a tangent line $l$ along the
 free cell~$c$ and its already visited (horizontal or vertical) neighbours~$c'$ and~$c''$. In the given situation the agent is currently (at time $t$) located in one of these direct neighbours. The tangent~$l$ additionally splits the currently known cell environment in different components and if now the (recursively defined) currently active start cell~$s$ is separated from the sequence $c',c,c''$, then the agent should first directly visit~$c$ before moving on as usual.

 The idea of this local behaviour is to meander in corridors of width three optimally, see Figure~\ref{fig-flaw}~(ii).
The authors claim that for this local behavior and the overall strategy there is a  worst-case example for the competitive ratio
of $\frac{7}{6}$. This example was already presented (with the given worst-case attribute) in Figure~8 of Wei et al.~\cite{wei}. In this example the above simple rule is applied twice as presented in our redrawing in Figure~\ref{fig-flaw}~(iii). By the online strategy in total four cells are visited twice.
 The optimal offline path visits any cell only once and is given in Figure~\ref{fig-flaw}~(iv). The ratio is exactly~$\frac{28}{24}=\frac{7}{6}$, this is so far correct.

 Unfortunately, the presented small example  definitely cannot be considered to be a finite worst-case for this specific policy.  In Figure~\ref{fig-new-flaw}~(i) we have slightly changed the environment to a more compact example, the specific rule is still again applied twice but the ratio against the optimum given in Figure~\ref{fig-new-flaw}~(ii)  is
 $\frac{26}{22}=\frac{13}{11}>\frac{7}{6}$. 

 \begin{figure}[htp]
 \begin{center}
  \includegraphics[scale=0.6]{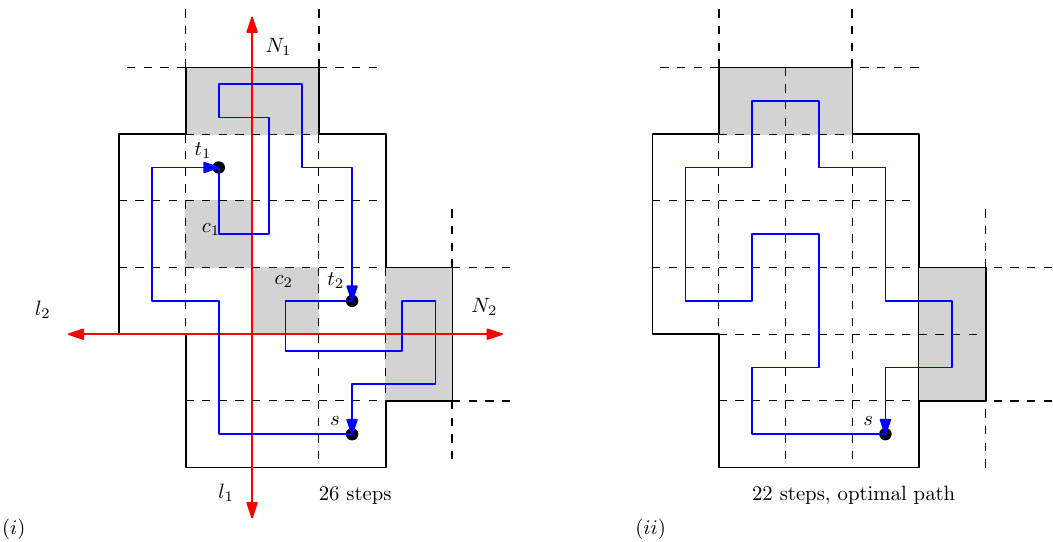}
\caption{The situation can be strengthened such that the ratio of the length of the path of the given strategy (i) and the optimal path length shown in (ii) attains a ratio of $\frac{26}{22}=\frac{13}{11}>\frac{7}{6}$. The narrow passages $N_1$ and $N_2$ both cause~2 extra revisits by running in the passage and also for reentering the remaining part again, respectively. In the optimal solution (ii) narrow passages are resolved optimally along the boundary.}
\label{fig-new-flaw}
\end{center}
\end{figure}

Note that the above given finite and small example can still not be considered as a lower bound on the competitive ratio of the strategy. The detour in a single fixed environment can be fully subsumed by the additive constant.
Lower bound constructions have to come along with arbitrarily large
grid cell polygons. We have to take care that the cost of the strategy and the cost of the optimal strategy can attain arbitrarily large values. The bound will be attained asymptotically.   

So in the first place where is the general flaw in the proof of the paper beyond the given $\frac{13}{11}$ local worst-case? The authors make use of a decomposition idea which was already used by Icking et al. in 2005. Starting with the pure DFS preference rule (left-hand) along the boundary, the so-called \emph{narrow passages} will always be explored optimally and therefore for the analysis it is sufficient to consider environments without narrow passages, so they can be omitted. The paper of Wei et al.~\cite{wei}
reuses this analysis idea and so they totally neglected narrow passages. But by the specific change of the preference (also in the first round) it happens that there can be additional visits  caused by (for example very simple) narrow passages of width~2. This is for instance already given in the presented \emph{new} single worst-case above.  In Figure~\ref{fig-new-flaw} (i) both $N_1$ and $N_2$ are narrow passages and both induce extra visits when this strategy runs into the passage and leaves the passage (and goes back to the rest of the environment). Narrow passages have to be taken into account  for this specific strategy, they cannot be omitted. Note that the presented passages could also have been extended beyond $N_1$ and $N_2$, nevertheless there are additional extra visits after entering and  after coming back from the passage into the sub-region without the narrow passage. This is obviously one core of the flaw and directly depends on the specific rule of preferring encountered cells first rather than directly moving on with the left-hand rule along the boundary.

We will now present a general lower bound construction that can also be applied to the strategy of Wei et al.~\cite{wei}. It shows that no strategy can be better than~$\frac{13}{11}$ competitive.

\section{Lower bound improvement}
In their lower bound construction, Kolenderska et al. \cite{kole} made use of sets of polygons and were able to achieve a lower bound of $\frac{20}{17}$ by merging (instead of concatenating) polygons. Finally,  in total  11~different sub-polygons (blocks) which were further subdivided in 7~different ratio categories have been used. Partially, the sub-polygons had to be merged in different subareas; compare Figure~5 of Kolenderska et al. \cite{kole}.

We apply a similar merging technique but make use of a different smaller set of polygons and categories and can also increase the bound to $\frac{13}{11}$. We reduce the total number of used  sub-polygons to~5 and they precisely represent the ratio categories.
 For further simplification we let the concatenation and merge always appear in exactly the same way (at a narrow passage rectangle of two cells).
Since we simply construct the
concatenation of the blocks in one fixed direction (horizontally) at well-defined entrance-rectangles and make use of a few number of potential blocks, our construction is (as we think) easy to follow and to implement.

  \begin{figure}[htp]
  \begin{center}
  \includegraphics[scale=0.57]{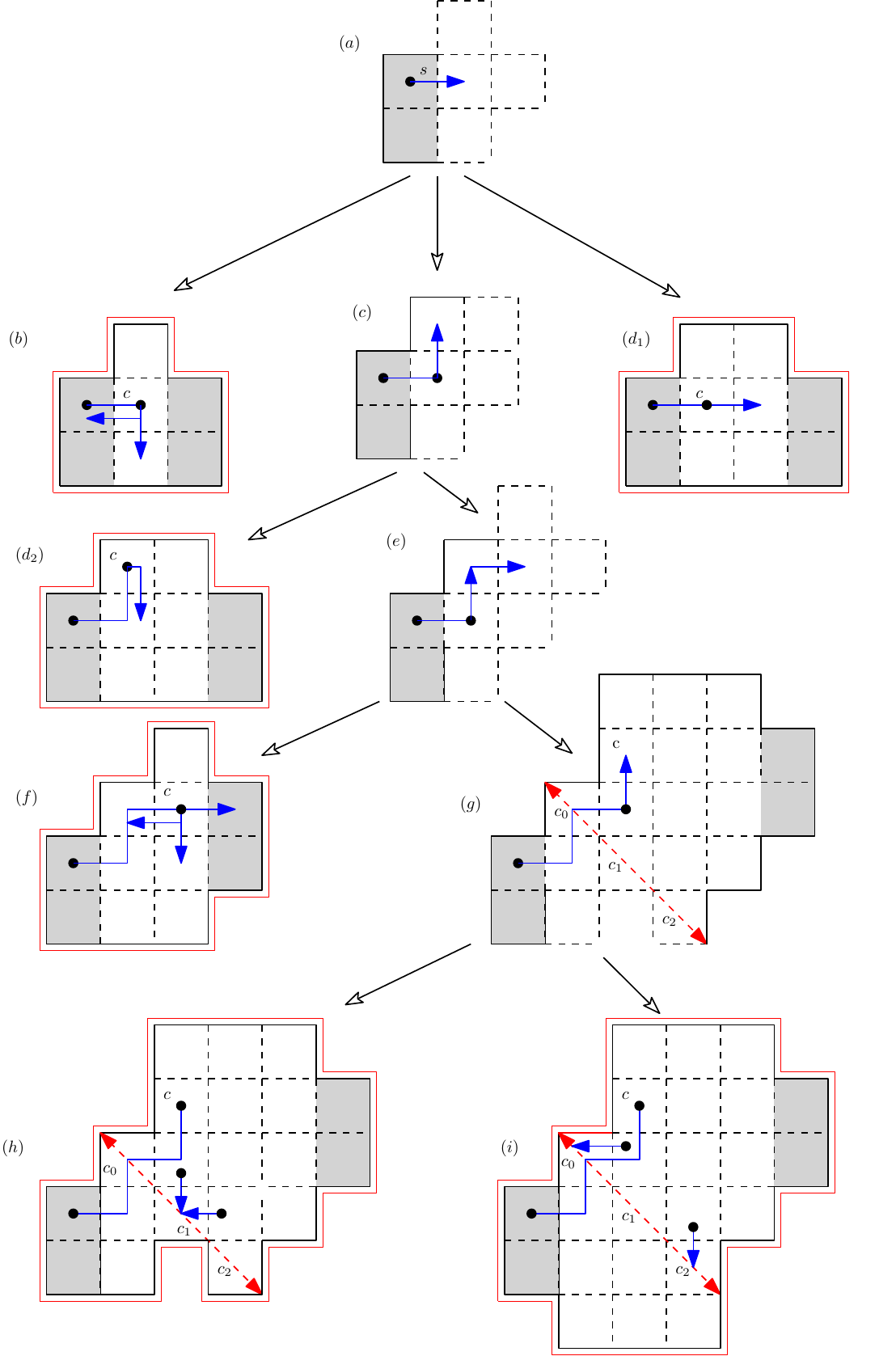}
\caption{The general scheme of the adversary strategy.
The strategy starts in a (potentially already known) rectangle of height~2 and width~1. The adversary strategy starts when this rectangle is left to the right.
For any strategy the adversary finally presents one of the blocks (b), (d), (f), (h) or (i), depending on the motion of the agent. In these final blocks each blue arrow indicates a different option for the movement starting from cell $c$, for different blue arrow options the same block will be presented.
The finalization of (g) in  block (h) or (i) depends on the first visit (blue arrow options) of a cell $c_0$, $c_1$ or $c_2$ after~$c$.} 
\label{fig-lb-total}
\end{center}
\end{figure}

The lower bound is shown in four steps.

\begin{enumerate}
\item Recursively the starting situation is a rectangle of height~2 and width~1 (a narrow passage) that has to be left to the right. A block adversary scheme is presented that finally ends in single fixed blocks. Figure~\ref{fig-lb-total} presents the general scheme and Figure \ref{fig-start-sit}
shows different starting situations that can be handled analogously.
 Any block offers a potentially new starting rectangle on its right side.  
\item A comparison of the optimal strategy against any strategy for these blocks separately as indicated and exemplified in Figure \ref{fig-lb-total-compare} and also given in Table~\ref{table-compare}.
\item A successive horizontal concatenation and merge of such blocks at the corresponding starting and ending narrow passages for the final analysis. For~$n$ successive blocks by this successive merge the online strategy can profit. But this is restricted to at most $2(n-1)$ steps less in total and the optimal strategy also precisely has this gain.
\item Finally, we collect all this information and these arguments for the statement of Theorem~\ref{theo-LB}. An online strategy that tries to avoid a maximal detour for an arbitrary long sequence of horizontal blocks has to trigger block (i), successively. Otherwise the ratio will be even worse than $\frac{13}{11}$.
\end{enumerate}
\subsection{Starting situation and adversary scheme}
We first consider the general single block adversary scheme (see Figure~\ref{fig-lb-total}). The agent starts in a (potentially known) horizontal starting rectangle of height~2 and width~1. We present the scheme for starting in the upper cell and moving directly to the right. I.e. left-hand rule is applied in the first step and we start in the upper cell of the starting rectangle. Note that the same scheme can be applied for
all possible four starting situations in a (potentially known) starting rectangle shown in Figure \ref{fig-start-sit}. If the rectangle is left from the lower cell indicated by $(a_2)$, we simply can mirror all blocks of Figure~\ref{fig-lb-total} horizontally. Thus the given scheme covers all options for the start and for all successive horizontal presentation of the blocks.

\begin{figure}[htp]
\begin{center}
  \includegraphics[scale=0.6]{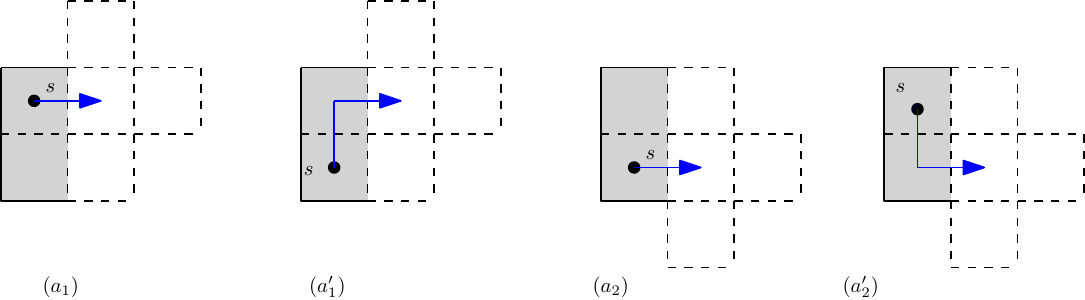}
\caption{All possible starting situations when the agent leaves a starting rectangle of height~2 and width~1 to the right. After that the adversary makes use of the scheme presented in Figure~\ref{fig-lb-total}.
For $(a_2)$ and $(a_2')$ the scheme of Figure~\ref{fig-lb-total} simply has to be mirrored horizontally.}
\label{fig-start-sit}
\end{center}
\end{figure}

\begin{figure}[htp]
 \begin{center}
  \includegraphics[scale=0.58]{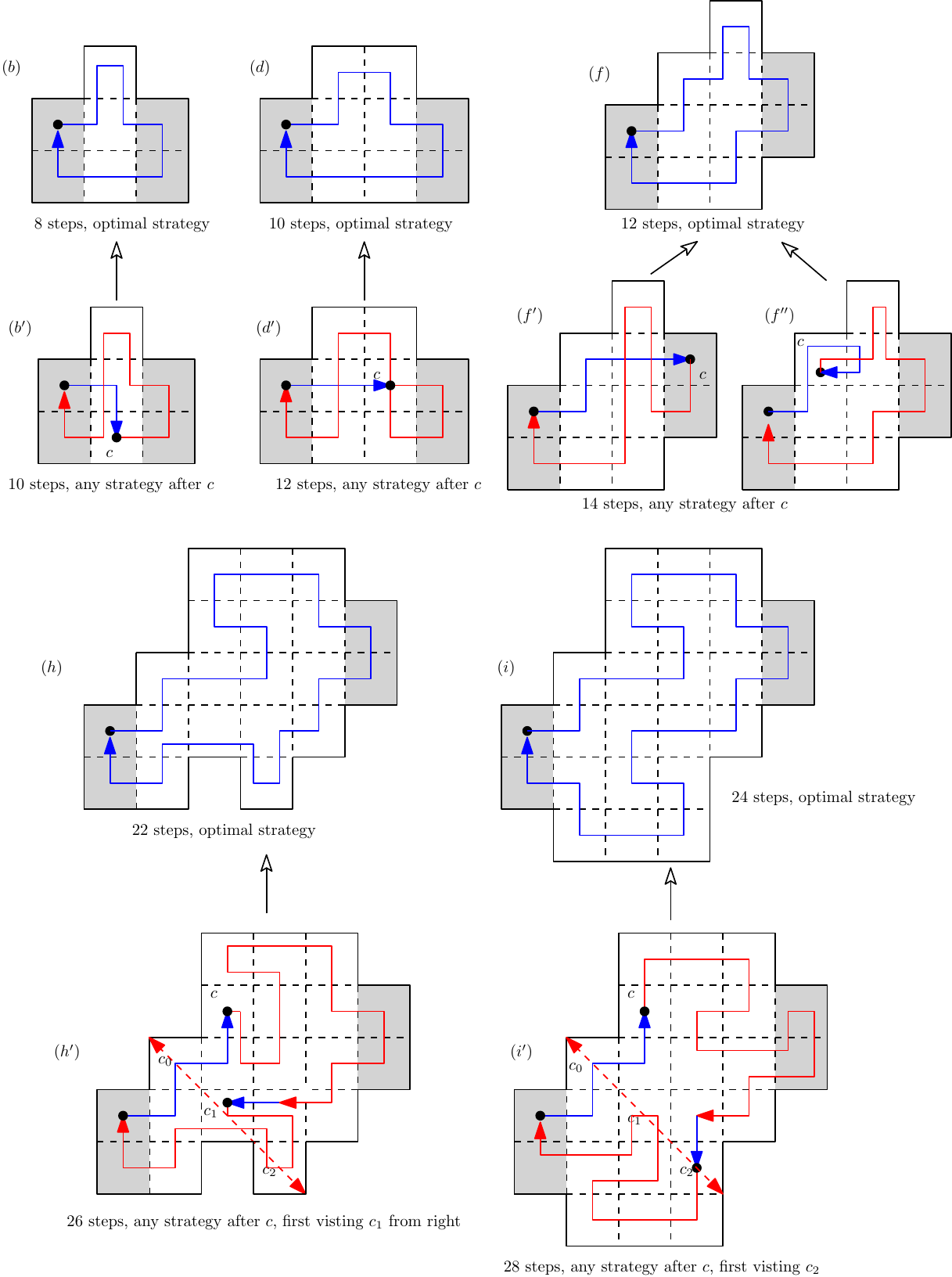}
\caption{Comparison of some online strategies movements to the optimal strategy for block (b), (d), (f), (h) and  (i). Here these blocks have been triggered by the movements in (b'), (d'), (f'), (f''), (h') and  (i'), respectively. The red paths in (b'), (d'), (f'), (f'') are optimal with respect to finishing the block and moving back to the start, two more steps are required. In (h') and (i')
the red path examples from $c$ to $c_1$ or $c_2$ triggered the final construction of the blocks, respectively. Four more steps against the optimum can never be avoided also for all other optional intermediate paths for (g) in Figure~\ref{fig-lb-total}.
}
\label{fig-lb-total-compare}
\end{center}
\end{figure}

For any strategy the adversary scheme will end in one of the 5 final blocks
(b), ($\mbox{d}_{1/2}$), (f), (h) and (i), depending on the  movement of a strategy before.
After the presentation, the blocks still have to be fully finished. In~(g) after visiting cell~$c$, the environment is only partially presented, some options are still left,
depending on the ongoing first visits of the cells $c_0$, $c_1$ or $c_2$, respectively. One of these cells will be visited first for finishing the exploration in the end. 
Some of the final blocks in Figure~\ref{fig-lb-total} cover the situation for different movements, indicated by the different blue arrows.
For example block (i) is used when the strategy (after having visited $c$) will visit either $c_0$ or $c_2$ first in comparison
to visiting $c_1$. So here two options (first $c_0$ or first $c_2$) are covered.
Block (h) is used when $c_1$ is visited first after starting at $c$, this can happen from two different direction or neighbouring cells, so here also two options are covered.

 At the given point in time (when blocks (b), ($\mbox{d}_{1/2}$), (f), (h) and (i) will be fully presented) the strategy can now  finish the rest of the block optimally. The overall construction has the following intention.  Against the overall optimal path any strategy will have to make at least two more steps (cell visits) than the optimal strategy for (b), ($\mbox{d}_{1/2}$) and (f) and at least four more steps for (h) and (i). Note that for (h) and (i) the intermediate movements from $c$ to the cells $c_0$, $c_1$ or $c_2$ can be very different. Additionally, the construction can be easily extended. Any block offers a well defined new starting rectangle to the right. An overall strategy could thus move further on to the right, an arbitrary number of blocks can be presented by an adversary strategy, block copies of (h) and (i) will be finalized during the backward movement. 

 \begin{table}[htp]
\begin{center}
\setlength{\arrayrulewidth}{0.4mm}
\begin{tabular}{ |p{2.5cm}| p{2.5cm}| }
\hline
& \\[-1ex]
\hskip 0.35cm$P_{(\ell)} \in \mathcal{P}$& \hskip 0.2cm$\frac{S_{ALG}(P_{(\ell)})\rule[-3pt]{0pt}{0pt}}{S_{OPT}(P_{(\ell)})\rule[6pt]{0pt}{0pt}}\geq $\\[+2.3ex]
\hline
& \\[-1ex]
\hskip 0.6cm $P_{(b)}$ & \hskip 0.7cm $\frac{10}{8}$ \\[+1.5ex]
\hskip 0.6cm $P_{(d)}$ & \hskip 0.7cm $\frac{12}{10}$ \\[+1.5ex]
\hskip 0.6cm $P_{(f)}$ & \hskip 0.7cm $\frac{14}{12}$ \\[+1.5ex]
\hskip 0.6cm $P_{(h)}$ & \hskip 0.7cm $\frac{26}{22}$ \\[+1.5ex]
\hskip 0.6cm $P_{(i)}$ & \hskip 0.7cm $\frac{28}{24}$ \\[+1.5ex]
\hline
\end{tabular}
\vspace*{5mm}
\caption{$P_{(l)}\in  \mathcal{P} $ denotes the grid polygon of the block $(l)$ for $l\in\{b,d,f,h,i\}$, the following lower bound on the ratios can be achieved by the corresponding blocks.}\label{table-compare}
\end{center}
\end{table}

 \subsection{Comparison to the optimal strategy}
In a second step we now analyse the ratios separately for each block (see Figure \ref{fig-lb-total-compare} and Table~\ref{table-compare}) against the optimum as intended. In  Figure \ref{fig-lb-total-compare} all optimal solutions for the blocks are given. We present some exemplifying reasonable efficient movements of the strategies after the blocks have been fully (or
partially) presented, these movements are given by red paths. For convenience, we did not present any strategy option (blue arrows) given in Figure~\ref{fig-lb-total}. It is obvious that the remaining options work in the same way.

For the formal comparison let $\mathcal{P} $ denote the set of all simple grid polygons.
For $P\in  \mathcal{P}$ let $S_{ALG}(P)$ be the number of steps for an online strategy $ALG$ for a grid polygon $P$ and let $S_{OPT}(P)$ denote the number of steps for an optimal strategy $OPT$ for $P$. If $P_{(l)}\in  \mathcal{P} $ denotes the
polygon of the block $(l)$ for $l\in\{b,d,f,h,i\}$, we finally can guarantee ratios as shown in Table~\ref{table-compare}  depending on what kind of block the online stragegy $ALG$ will \emph{produce}.

  Note that we will not present any possible movement
(any possible red path) under the given constraints in Figure \ref{fig-lb-total-compare} since there can be very many inefficient movements for a given strategy. Also for (h) and (i) it is almost trivial to see that any strategy  has to make the desired extra (four) visits after starting from $c$, then visiting either $c_0$, $c_1$ or $c_2$ first and then finishing the fully given block. For the partially given starting configuration (g) in Figure~\ref{fig-lb-total} with starting cell $c$ two examples with locally optimal movements are presented in Figure \ref{fig-lb-total-compare}.

\begin{figure}[htp]
\begin{center}
  \includegraphics[scale=0.58]{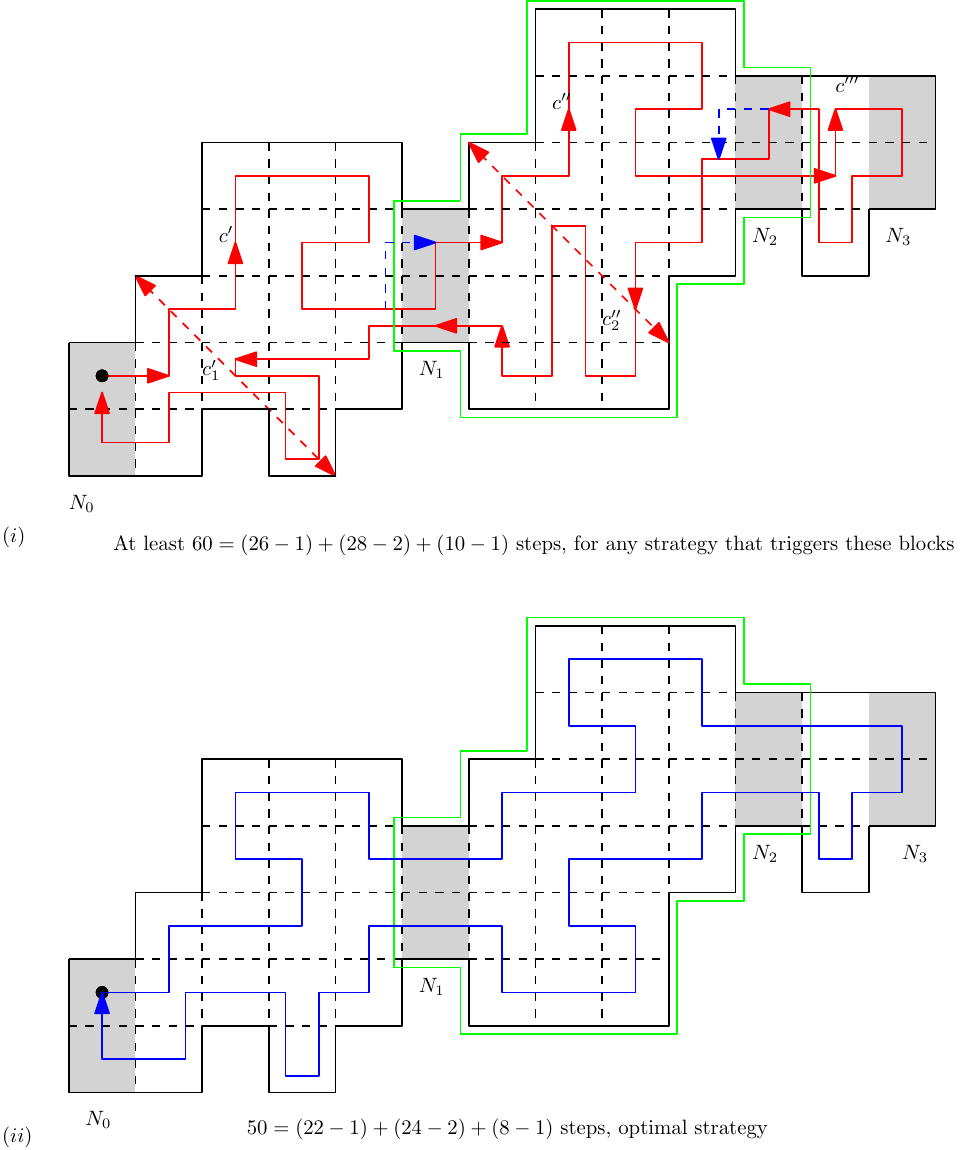}
\caption{The online strategy and the offline strategy profit from the merge in the same way. In inner blocks one can save two vertical steps, in outer blocks one can save one vertical step.}\label{fig-example-concat}
\end{center}
\end{figure}
\subsection{Concatenation and merge of blocks}
In a third step, we successively construct such blocks horizontally
and merge them together. On the right hand side of each single
(partially) presented block there is a new starting block (grey shaded block) for the next iteration to the right.
Note that with respect to the overall start and the single block construction, such a potentially new starting block of width~2 and height~1 on the
right hand side is always already fully known when it will be visited.  It is easy to see that we can start the next iteration in this ending grey block of the preceding construction. Figure \ref{fig-example-concat}~(i) shows an example of three such concatenated blocks, triggered by an online strategy.  The strategy starts in the known block $N_0$, at $c'$ the next starting block $N_1$ is already known, at $c''$ the next starting block $N_2$ is known and at $c'''$ the next (potentially) starting rectangle $N_3$ is given. Note that the first two blocks will be finally fully presented at the backward path after $c_2''$ and $c_1'$, respectively.

The merge of the blocks has some influence on the number of cells and steps. To analyse this behaviour let us first consider the corresponding optimal strategy; see for example Figure \ref{fig-example-concat}~(ii). The corresponding optimal strategy will pass all \emph{intermediate} merged starting and end rectangles optimally  (note that they are narrow passages) without any vertical movements. W.r.t. the above single block analysis we thus
will save one step for the first block on the right side. Two steps for the second block. Namely one vertical movement on the left side and one on the right side.  For the rightmost block we save one vertical movement  on its left side. The corresponding vertical movements will be omitted and this precisely fits to the number of cells that have been merged. In the given examples we have  $1+2+1=4$ cells less by merging the blocks when compared to the sum of the cells over the individual blocks.  In general for $n$ blocks we will save $1 + 2(n-2) +1=2(n-1)$ steps in total for the optimal strategy for any possible concatenation of the blocks.

How does an online strategy profit from the merge of the starting/ending rectangles?
For comparison to the overall online strategy we can argue as follows.  Since the narrow passages are all fully known in advance, 
an online strategy profits in the same way but not more.
 For example, for the analysis we can shift vertical movements into the corresponding neighbouring blocks without any further changes.  In total, the outer blocks can profit by one step less on the left or right side,
 respectively.  Each inner block can profit by two steps less which is in total again the same profit of $1 + 2(n-2) +1=2(n-1)$ steps less for
 $n$ successive arbitrary blocks.
\begin{figure}[htp]
\begin{center}
  \includegraphics[scale=0.58]{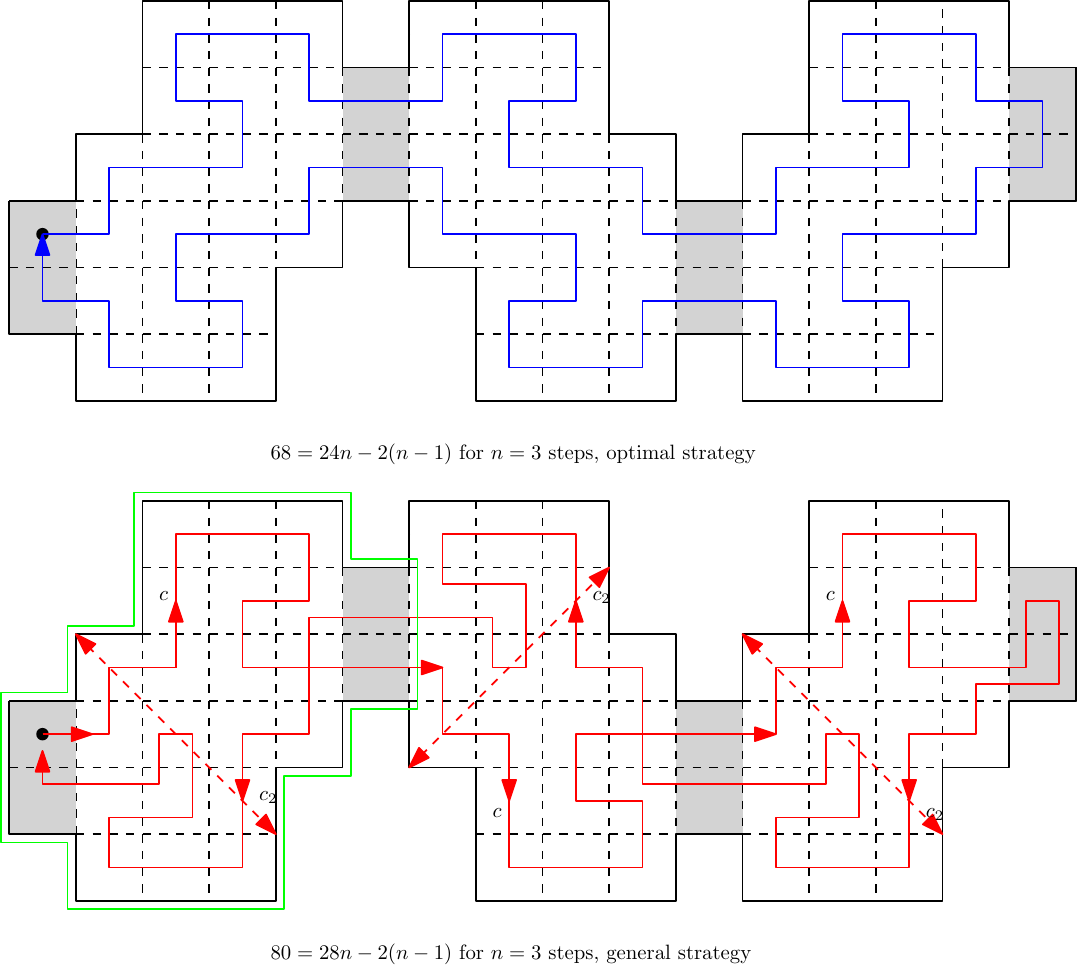}
\caption{For the smallest attainable ratio the online strategy has to trigger successive blocks of type (i). The bound $\frac{13}{11}$ is attained asymptotically.}
\label{fig-final-wc}\end{center}
\end{figure}

Finally, we argue that for the best ratio a strategy triggers block (i)
all the time, Figure~\ref{fig-final-wc} shows an example for $n=3$.
Note that the inner block is a mirrored variant for a different entry point.

\pagebreak
\begin{theorem}\label{theo-LB} Let ${ALG}$ be an online exploration strategy  and ${OPT}$ the corresponding optimal offline strategy. For any $\epsilon>0$, 
we can construct arbitrarily large simple grid polygons $P_{\epsilon}$ such that the ratio of $S_{ALG}(P_{\epsilon})$ against $S_{OPT}(P_{\epsilon})$ will be larger than~$\frac{13}{11}-\epsilon$. Therefore no strategy can attain a competitive ratio strictly smaller than~$\frac{13}{11}$.
\end{theorem}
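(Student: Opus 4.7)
The plan is to chain together the four preparatory steps stated just before the theorem. Given $\epsilon > 0$, I would fix an integer $n$ (to be chosen in terms of $\epsilon$) and have the adversary present to $ALG$ a horizontal concatenation of $n$ blocks glued along the common height-$2$, width-$1$ starting/ending rectangles of Figure~\ref{fig-lb-total}. Each time $ALG$ leaves such a rectangle to the right, the adversary runs the decision tree of Figure~\ref{fig-lb-total} on $ALG$'s actual moves and closes off the current block as one of the five finalised types $\ell \in \{b, d, f, h, i\}$; if the initial orientation matches $(a_2)$ or $(a_2')$ of Figure~\ref{fig-start-sit} the whole scheme is mirrored horizontally. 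After $n$ rounds the resulting environment $P_{\epsilon}$ is a legal simple grid polygon whose cell count (and thus $S_{OPT}(P_\epsilon)$) grows linearly in $n$, so letting $n \to \infty$ absorbs the additive constant $A$ of~\eqref{equ-compAn}.

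Write $n_\ell$ for the number of blocks of type $\ell$ triggered by $ALG$ and $(a_\ell, o_\ell)$ for the numerator/denominator pair listed in Table~\ref{table-compare}. Step~$2$ (lower bound $S_{ALG}$ block-by-block using the red paths of Figure~\ref{fig-lb-total-compare}) combined with step~$3$ (the merge at the $n-1$ interior narrow passages removes exactly $2(n-1)$ cells, and thus exactly $2(n-1)$ steps, from both $S_{ALG}$ and $S_{OPT}$) yields
\[
\frac{S_{ALG}(P_{\epsilon})}{S_{OPT}(P_{\epsilon})} \;\ge\; \frac{\sum_{\ell} n_\ell\,(a_\ell - 2) + 2}{\sum_{\ell} n_\ell\,(o_\ell - 2) + 2}.
\]
A direct check on Table~\ref{table-compare} shows that the five per-block rates $(a_\ell - 2)/(o_\ell - 2)$ evaluate to $4/3,\,5/4,\,6/5,\,6/5,\,13/11$ respectively, the last being the unique minimum; equivalently, $11\,(a_\ell - 2) \ge 13\,(o_\ell - 2)$ for every $\ell$, with equality only at $\ell = i$.

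Multiplying this scalar inequality through and summing reduces the displayed lower bound to $13/11 - O(1)/S_{OPT}(P_{\epsilon})$. Since $S_{OPT}(P_{\epsilon}) \to \infty$ with $n$, choosing $n$ large enough pushes $S_{ALG}(P_{\epsilon})/S_{OPT}(P_{\epsilon})$ beyond $13/11 - \epsilon$, proving the theorem. The best response available to $ALG$ is to trigger block~$(i)$ in every round, because only that type saturates the per-block rate $13/11$; this is exactly the configuration depicted in Figure~\ref{fig-final-wc}, and one verifies in that case that the ratio becomes $(26n+2)/(22n+2) \to 13/11$. The main obstacle lies in step~$2$: one must check, by enumerating the blue-arrow branches of Figure~\ref{fig-lb-total} — and, in sub-configuration~(g), the three possible first-visit choices among $c_0, c_1, c_2$ after $c$ — that \emph{every} conceivable response of $ALG$ genuinely finalises one of the five blocks with the claimed surplus of $2$ or $4$ steps, and that the merge at the intermediate narrow passages really gives $ALG$ no more savings than it gives $OPT$, so that the subtracted $2(n-1)$ is simultaneously an upper bound for $ALG$'s gain and a lower bound for $OPT$'s.
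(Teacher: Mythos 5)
Your proposal is correct and follows essentially the same route as the paper: present $n$ concatenated blocks, use the per-block ratios of Table~\ref{table-compare}, subtract the common merge gain of $2(n-1)$ from both $S_{ALG}$ and $S_{OPT}$, and let $n\to\infty$ to absorb the additive constant. The only refinement is that you justify the claim that mixing block types cannot help $ALG$ via the explicit inequality $11(a_\ell-2)\geq 13(o_\ell-2)$, where the paper simply asserts that triggering block~(i) throughout is optimal and computes the limit $\frac{28+26(n-1)}{24+22(n-1)}\to\frac{13}{11}$ for that case.
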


\begin{proof}
We sum up by the previous argumentation. The adversary chooses an arbitrary $n\in \N$ for the number of blocks depending on $\epsilon$ (see the limit below). The strategy starts in a starting rectangle and depending on the movement, $n$~successive concatenated blocks will be presented. By its movement the strategy ${ALG}$ can actually successively \emph{choose} from the local ratios from Table~\ref{table-compare}  for single blocks. For the concatenation we have to take the common (strategy/optimum) gain of $2$ (inner blocks) and $1$ (outer blocks) into account. For a single ratio $\frac{S_A}{S_O}$ we consider $(S_A-2)$ and $(S_A-1)$ against $(S_O-2)$ and $(S_O-1)$, respectively. Obviously, the smallest possible ratio for any strategy will be attained, if the strategy triggers block (i) (or its mirrored analogue) all the time. After $n$ such blocks for the best attainable ratio a strategy ${ALG}$ will require at least $(28-1)+(28-2)(n-2)+(28-1)$ steps against $(24-1)+(24-2)(n-2)+(24-1)$ steps of an optimal strategy ${OPT}$ and
$$\lim_{n\to\infty} \frac{28 + 26(n-1)}{24 + 22(n-1)} = \frac{13}{11}$$ gives the first conclusion. 

To finally conclude w.r.t. inequality~(\ref{equ-compAn}) we show that there is no strategy $ALG$ that guarantees  $S_{ALG}(P) \leq C' \cdot S_{OPT}(P) + A$ for any $1\leq C'<\frac{13}{11}$ and any (fixed) additive constant $A$ and for any $P$. So let $C'=\frac{13}{11}-\delta$ for a fixed $\delta>0$. 
Thus the inequality would mean  $$\frac{S_{ALG}(P)}{ S_{OPT}(P)}\leq \frac{13}{11} +\left(\frac{A}{S_{OPT}(P)}-\delta\right) \mbox{ holds for any } P\,.$$ Now for $\delta$ and $A$ we can choose a first $n\in \N$ such that for any combination of the $n$ blocks and the corresponding $P_n$ we guarantee that $\frac{A}{S_{OPT}(P_n)}-\delta<0$ holds. 
Now for such an $n$ that depends on $C'$ and $A$ let $\epsilon:=\delta-\frac{A}{S_{OPT}(P_n)}$ with $\epsilon>0$.  
Now we can make use of the above limit process and find the first $n'\in \N$ such that $\frac{28 + 26(n'-1)}{24 + 22(n'-1)} > \frac{13}{11}-\epsilon$ holds. We make use of the maximum of $n$ and $n'$ and present an adversary strategy for this block size which will give the contradiction. 
\end{proof}

\section{Conclusion}\label{sect-concl}
We revisit the problem of exploring simple grid polygons online. In this setting, an agent has to visit every cell of an unknown grid environment without holes, starting from the boundary, and must finally return to the start. Information is limited to knowledge about the four cells adjacent to the current position and the agent has the ability to build a map of the detected cells. The performance is given by competitive analysis, i.e. the total number of cell visits is compared to an optimal offline solution where the environment is known in advance.
As we have also clarified that simple grid polygons and solid grid graphs are slightly different, the results can be translated.  

Here we show that the current best algorithm by Wei et al. \cite{wei} admits a flaw in the proof and consequently does not achieve the proposed competitive ratio of $\frac{7}{6}$. Therefore, the best known upper bound remains the $\frac{5}{4}$-competitive strategy by Kolenderska et al. \cite{kole}. Furthermore, we improve upon the previous lower bound: By using similar methods to \cite{kole}, we introduce a set of polygons where any online strategy fails to achieve a better competitive ratio than $\frac{13}{11}$ when compared to the optimal offline variant. The resulting gap of $\frac{3}{44}$ between the two bounds thus remains to be closed.

  \bibliographystyle{plain}
  \bibliography{references.bib}


%
%
%
\end{document}